\newcommand\myshade{85}
\definecolor{YellowOrange}{RGB}{255,174,66}
\definecolor{Aquamarine}{RGB}{127,255,212}
\colorlet{mylinkcolor}{violet}
\colorlet{mycitecolor}{YellowOrange}
\colorlet{myurlcolor}{Aquamarine}
\tikzset{
    -Latex,auto,node distance =1 cm and 1 cm,semithick,
    state/.style ={ellipse, draw, minimum width = 0.7 cm},
    point/.style = {circle, draw, inner sep=0.04cm,fill,node contents={}},
    bidirected/.style={Latex-Latex,dashed},
    el/.style = {inner sep=2pt, align=left, sloped}
}
\newcommand{\R}{\mathbb{R}}
\newcommand{\Prob}{\mathbb{P}}
\newcommand{\E}{\mathbb{E}}
\newcommand{\indep}{\rotatebox[origin=c]{90}{$\models$}}
\newcommand{\overbar}[1]{\mkern 1.5mu\overline{\mkern-1.5mu#1\mkern-1.5mu}\mkern 1.5mu}
\newcommand{\dx}[1]{\ \text{d} #1}
\newcommand{\rW}{\overbar{W}}
\newcommand{\rX}{\overbar{X}}
\newcommand{\rY}{\overbar{Y}}
\newcommand{\rw}{\overbar{w}}
\newcommand{\rx}{\overbar{x}}
\newcommand{\ry}{\overbar{y}}
\newcommand{\fF}{\mathscr{F}}
\newcommand{\rZ}{\overbar{Z}}
\newcommand{\abs}[1]{\left|{#1}\right|}
\theoremstyle{plain}
\newtheorem{thm}{Theorem}
\newtheorem{cor}{Corollary}
\theoremstyle{definition}
\newtheorem{assum}{Assumption}
\theoremstyle{remark}
\newtheorem{remark}{Remark}
\newcommand{\blind}{0}
\title{The role of discretization scales in causal inference with continuous-time treatment}
  \author{[Blinded for review]}
\author{Jinghao Sun$^{1}$, 
and Forrest W. Crawford$^{1,2,3,4}$ \\[1em]
1. Department of Biostatistics, Yale School of Public Health \\
2. Department of Statistics \& Data Science, Yale University \\
3. Department of Ecology \& Evolutionary Biology, Yale University \\
4. Yale School of Management \\
} 
\begin{document}

\maketitle

\begin{abstract}
\noindent There are well-established methods for identifying the causal effect of a time-varying treatment applied at discrete time points. However, in the real world, many treatments are continuous or have a finer time scale than the one used for measurement or analysis. While researchers have investigated the discrepancies between estimates under varying discretization scales using simulations and empirical data, it is still unclear how the choice of discretization scale affects causal inference.
To address this gap, we present a framework to understand how discretization scales impact the properties of causal inferences about the effect of a time-varying treatment. We introduce the concept of ``identification bias", which is the difference between the causal estimand for a continuous-time treatment and the purported estimand of a discretized version of the treatment. We show that this bias can persist even with an infinite number of longitudinal treatment-outcome trajectories.
We specifically examine the identification problem in a class of linear stochastic continuous-time data-generating processes and demonstrate the identification bias of the g-formula in this context. Our findings indicate that discretization bias can significantly impact empirical analysis, especially when there are limited repeated measurements. Therefore, we recommend that researchers carefully consider the choice of discretization scale and perform sensitivity analysis to address this bias.
We also propose a simple and heuristic quantitative measure for sensitivity concerning discretization and suggest that researchers report this measure along with point and interval estimates in their work. By doing so, researchers can better understand and address the potential impact of discretization bias on causal inference.
\\[1em]

\noindent \textbf{Keywords}: observational studies, time-varying treatment, continuous-time processes, g methods, g-formula.
\end{abstract}

%%%%%%%%%%%%%%%%%%%%%%%%%

\section{Introduction}

Under the potential outcome framework \citep{rubin2005causal}, causal inference for a time-varying treatment is challenging due to time-varying confounding and treatment-confounder (outcome) feedback over time \citep[Chap.~19,~20]{hernan2020july}. Na\"ive adjustments for time-varying confounders can result in biased estimates of causal effects, particularly when these confounders are also mediators of the effect of prior treatment on the outcome. To address this challenge, ``g-methods" \citep[Chap.~21]{naimi2017introduction, hernan2020july}, such as the g-formula, marginal structural models, and structural nested models, have emerged as effective tools for identifying causal effects of time-varying treatments. These methods have been widely applied in empirical research in medicine and public health \citep{saul2019downstream, vangen2018hypothetical, arabi2018corticosteroid, naimi2013causal}.

It is widely acknowledged that g-methods are effective for analyzing regular longitudinal data, where each unit is measured at the same time points, also known as the \emph{discretization grid} or \emph{observation plan}, for all variables. However, in reality, many variables of epidemiological, medical, and social interest are continuous trajectories that can evolve over time. Examples of such trajectories include opioid doses \citep{straub2022trajectories}, physical activities \citep{mok2019physical}, body mass index (BMI) \citep{buscot2018bmi}, and socioeconomic status \citep{gustafsson2010life}, which may change at much finer time scales. In such cases, the treatment-outcome process may be better regarded as a continuous-time stochastic process, in which time-varying confounding can occur continuously.

Continuous-time processes do not naturally lend themselves to discretization, and the observed discretization is often chosen arbitrarily due to limitations in data collection or intentional decisions made by researchers to simplify analysis. For instance, in studies of medical interventions, the number and timing of clinic visits are often dictated by practical or budgetary considerations. Researchers may then proceed to analyze the data by selecting a discretization grid to facilitate analysis. Some common scenarios include using a coarser discretization to reduce computational costs \citep[e.g.][]{neugebauer2014targeted}, or selecting an arbitrary discretization grid to apply g-methods with irregular longitudinal data from electronic health record (EHR) data \citep[e.g.][]{pullenayegum2022randomized, katsoulis2021weight}.

The na\"ive application of g-methods to discretely sampled continuous-time trajectories can introduce an ``identification bias". This bias may persist even as the number of observed treatment-outcome trajectories approaches infinity. The identification bias is the difference between the discrete g-method functional and the true causal estimand, such as the counterfactual mean. While the nature of this bias remains incompletely understood, recent research indicates that the choice of discretization scales can significantly affect causal estimates for the effect of time-varying treatments. To address this, \citet{hernan2009observation} rely on the critical assumption of ``coincidence between observation times and times of potential treatment change" for more transparency. In contrast, \citet{etievant2021causal} investigate the extreme case of the coarsest discretization, where only one time point is available, and demonstrate that only under stringent assumptions can the g-formula estimand using the coarsest data be expressed as a weighted average of counterfactual means of a set of longitudinal treatment plans. 
Sensitivity analysis conducted by \citet{sofrygin2019targeted} illustrate differences in estimates of targeted minimum loss-based estimation (TMLE) under four discretization scales using EHR data. \citet{adams2020impact} and \citet{ferreira2020impact} assume the existence of a finest discrete-time data generating process (DGP) and explore the impact of observing only a coarser subset of the finest grid. Using simulations and empirical EHR data sets, they demonstrate that common methods, including the g-formula, IPW, and TMLE, may exhibit large biases when the discretization grid is coarse.

In this paper, we study the effect of discretization scales on causal estimands for time-varying treatments when the true causal data-generating process (DGP) is continuous in time.  We first introduce the potential outcome notation to continuous time \citep{sun2022causal} and highlight the conceptual issues that arise from discretization in causal inference with continuous-time DGPs. We then identify the two constituent parts of identification bias, which we call the ``discretization bias" and the ``asymptotic bias". 
We provide an in-depth study of a class of linear continuous-time stochastic DGPs and illustrate how the identification bias is influenced by discretization.
We further propose a simple and heuristic quantitative measure for sensitivity concerning discretization, and suggest researchers report this measure along with point and interval estimates in studies.

%%%%%%%%%%%%%%%%%%%5

\section{Review of discrete-time causal identification}

Consider a longitudinal study with measurements at $J+1$ discrete time points, denoted by $W_k$, $Y_k$, and $Z_k$ for the observable treatment, outcome, and covariates at the $k$th time point, where $k$ ranges from $0$ to $J$. We define $X_k$ as the pair of $(Y_k, Z_k)$. Additionally, we define $\rW_k$, $\rX_k$, and $\rY_k$ as the history $(W_0, \ldots, W_k)$, $(X_0, \ldots, X_k)$, and $(Y_0, \ldots, Y_k)$, respectively. The observable data is comprised of $\mathcal{D}^J \equiv \{\rW_J, \rX_J\}$.

For a deterministic static treatment plan $\rw_{J-1}$ that specifies the treatment values up to the $(J-1)$th time point, we denote the corresponding potential outcome at time point $J$ as $Y_J^{\rw_{J-1}}$. This is the outcome value at the end of the study if the individual had followed the treatment plan $\rw_{J-1}$. The causal estimand is the counterfactual mean $\E[Y_J^{\rw_{J-1}}]$.

We focus on the g-formula identification strategy, which is nonparametrically equivalent to the inverse probability weighting (IPW) strategy with a specified marginal structural model (MSM) \citep{robins1999association}. To identify $\E[Y_J^{\rw_{J-1}}]$ with the g-formula in discrete-time, three identification assumptions are needed \citep{hernan2020july}. We define $\text{Supp}(U)$ as the support of a random variable or vector $U$. For notational convenience, we let $W_{-1} = X_{-1} = 0$.

\begin{assum}[Sequential Consistency (SC)]
    \label{assum:SC}
    \[X^{\rw_{J-1}}_{J} = X_{J}, \text{ if } \rw_{J-1} =  \rW_{J-1}.\]
\end{assum}

\begin{assum}[Sequential Positivity (SP)]
    \label{assum:SP}
For $k = 0, \ldots, J-1$ and all $(\rx_k, \rw_{k-1}) \in \text{Supp}(\rX_k, \rW_{k-1})$, $(\rx_k, \rw_{k}) \in \text{Supp}(\rX_k, \rW_{k})$ 
\end{assum}

\begin{assum}[Sequential Exchangeability (SE)]
    \label{assum:SE}
For $k = 0, \ldots, J-1$, 
	\[Y_J^{\rw_{J-1}} \indep W_k | \rW_{k-1}, \rX_{k}.\]
    
\end{assum}

Under the above assumptions, the g-formula \citep{gill2001causal} provides an identification of $\E[Y_J^{\rw_{J-1}}]$ given by
\begin{equation}
    \begin{aligned}
    \E[{Y}_J^{\rw_{J-1}}] = \int_{x_0}\ldots\int_{x_{J-1}} &\E[Y_J|\rX_{J-1} = \rx_{J-1}, \rW_{J-1} = \rw_{J-1}] \\ &\times\prod_{k = 0}^{J-1}\Pr(X_k \in \dx{x}_k|\rX_{k-1} = \rx_{k-1}, \rW_{k-1} = \rw_{k-1}).
    \end{aligned}
    \label{eq:gforx}
\end{equation}

%%%%%%%%%%%%%%%%%5

\section{Continuous-time treatment processes}

\subsection{Setting and notation}

Define a standard probability space $(\Omega, \mathscr{F}, \Prob)$, where $\Omega$ is the sample space, $\fF$ is a $\sigma$-algebra, and $\Prob$ is a probability measure. Following \citet{sun2022causal}, we consider a finite study period which ends at time $T$, and define the index set $\mathcal{T} = [0, T]$.  Define the following stochastic processes: the treatment process $W: \Omega \times \mathcal{T}\rightarrow \R$, the outcome process $Y: \Omega \times \mathcal{T}\rightarrow \R$, and the covariate process $Z = (Z^1, \ldots, Z^{p-1}): \Omega \times \mathcal{T}\rightarrow \R^{p-1}$. 
We then define $X$ as a multivariate process, where $X = (X^1, \ldots, X^p) \equiv (Y, Z^1, \ldots, Z^{p-1})$, with superscripts from 1 to $p$ representing each component. To describe the treatment trajectory up to time $t$, we define $\rW(\omega, t) = \{W(\omega, s): 0 \le s \le t\}$, where $W(\omega, s)$ (or $W_s(\omega)$, whichever is convenient) is the value of the observable treatment trajectory at time $s$. 
Similarly, we define $\rX(\omega, t) = \{X(\omega, s): 0 \le s \le t\} = (\rY(\omega, t), \rZ(\omega, t))$.  We denote the observable full trajectories as $\mathcal{D} \equiv (\rX_T, \rW_T)$.

We define the ``potential outcome trajectory up to time $t$" as the values of the outcome and covariates that would have been observed up to time $t$, had the predetermined treatment plan of interest been precisely followed. This is denoted by 
\[\rX^{\rw(T)}(t) = \{X^{\rw(T)}(s): 0 \le s \le t\} = (\rY^{\rw(T)}(t), \rZ^{\rw(T)}(t)).\]
We assume that the future values of treatment cannot affect the past values of the outcome and covariates, so we have $\rX^{\rw(T)}(t) = \rX^{\rw(t)}(t)$.
Our primary interest is in estimating the ``counterfactual mean" of the treatment plan $\rw(T)$ at the end of the study, denoted by $\E[Y^{\rw}_T] \equiv \E[Y^{\rw(T)}(T)]$.

To connect these concepts with the observable data, we consider a ``deterministic discretization grid" $\Delta^J \equiv \{t_0, t_1, \ldots, t_J\},$ with $0 = t_0 < t_1< \ldots< t_J = T$. This allows us to relate the continuous-time data to the discrete-time data, by defining $W_{k} \equiv W(t_k)$ and $X_{k} \equiv X(t_k)$. Here, we use subscripts like ``$i, k, J$" for discrete-time data, and subscripts like ``$s, t, t_k, T$" for continuous-time data. Then, $\mathcal{D}^J \equiv \{\rW_J, \rX_J\}$ is the corresponding discrete-time observable data under $\Delta^J$.

%%%%%%%%%%%%%%%%%%%%%%%%%%%%%%%%%%

\subsection{Discretization and identification assumptions}

Through the continuous-time lens, conceptual and analytical issues arise from discretizations under the potential outcome framework. In particular, certain identification assumptions that are relied upon in discrete-time g methods may be violated in continuous-time DGPs. For instance, unobserved trajectory values that impact both future treatment and outcome could violate the Sequential Exchangeability condition (Assumption \ref{assum:SE}). This scenario is illustrated in Figure \ref{fig:vioa}.

Moreover, in the continuous-time context, a single discrete-time treatment plan $\rw_{J-1}$ may correspond to multiple distinct treatment trajectories $\rw_T$, all of which pass through the same points $(t_k, w_k), k = 0, \ldots, J-1$. This situation results in multiple versions of treatment, thereby violating the Sequential Consistency condition (Assumption \ref{assum:SC}). As a result, the notation $Y^{\rw_{J-1}}_J$ becomes ill-defined. Figure \ref{fig:viob} illustrates this scenario.

\begin{figure}
    \centering
    
    \begin{subfigure}{1\textwidth}
      \centering
      \includegraphics[width=\linewidth]{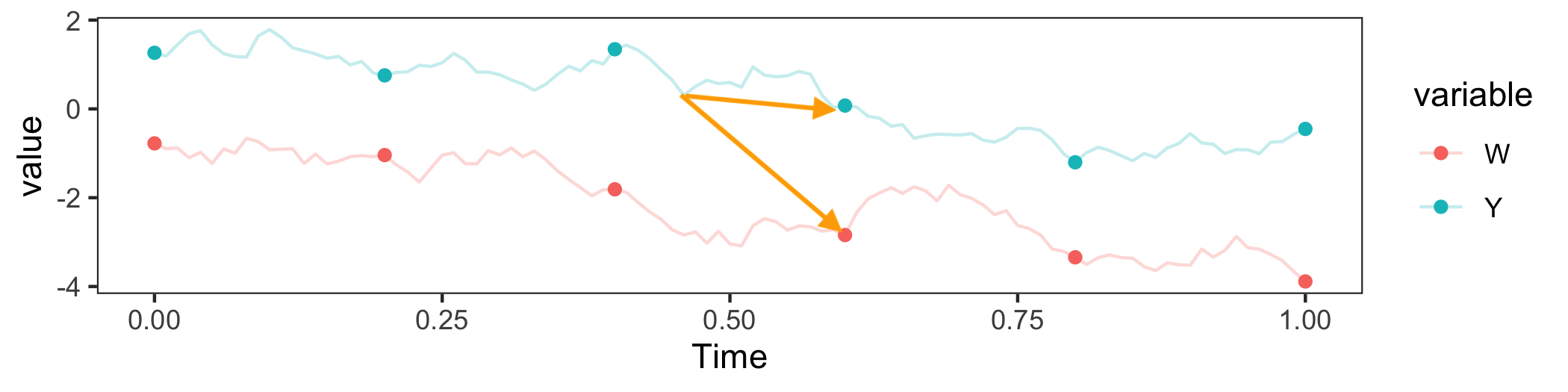}
      \caption{Violation of Sequential Exchangeability.}
      \label{fig:vioa}
    \end{subfigure}
    %\hspace{0.05\textwidth}
    \begin{subfigure}{1\textwidth}
      \centering
      \includegraphics[width=\linewidth]{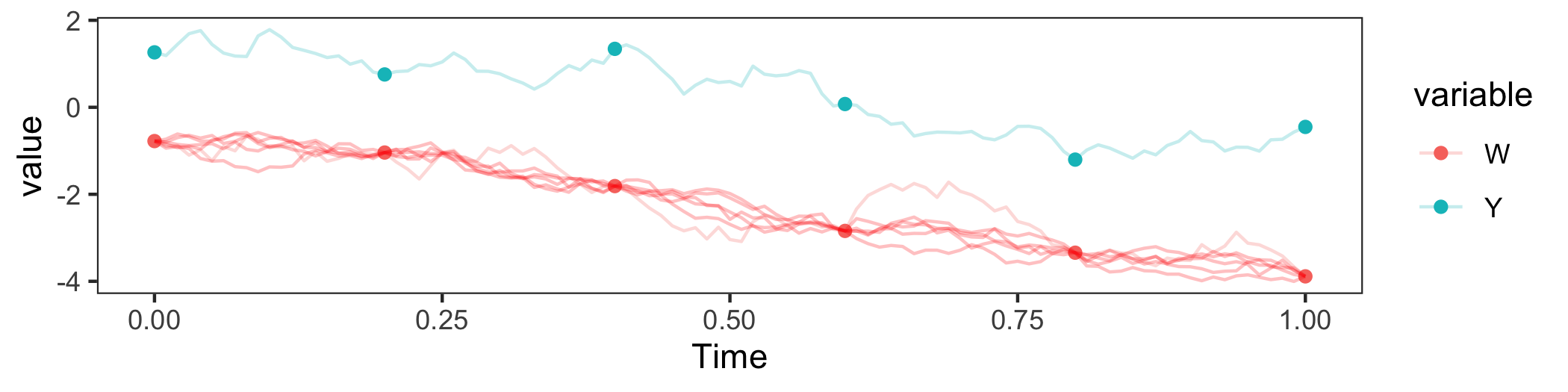}
      \caption{Violation of Sequential Consistency.}
      \label{fig:viob}
    \end{subfigure}
    
    \caption{Violations of discrete-time identification assumptions under continuous-time DGPs. The top panel shows how trajectory values between observed time points can impact observed $W$ and $Y$; the bottom panel shows how many possible treatment trajectories are compatible with the observed data and each of these may exert a different impact on the outcome. As a result, there are ``multiple versions of treatment'' across time points, and possibly across units. ($W$: treatment; $Y$: outcome; Dots are observed values on a discretization grid; Solid lines are full trajectories; Arrows show the impact of unobserved values on the treatment and outcome.)}
    \label{fig:vio}
\end{figure}

\subsection{Bias due to discretization}

Under continuous-time DGPs, standard discrete-time identification assumptions may not hold, leading to differences between g-method estimands and the true counterfactual mean. Specifically, for a given static treatment trajectory $\rw_T$ (or $\rw$ for simplicity), the true causal estimand is defined as \[\eta \equiv \E[Y_T^{\rw}].\]
For a given discretization grid $\Delta^J$, we denote the estimand from a generic discrete-time causal identification strategy as $\theta_{\Delta^J}$, simplified as $\theta_J$ for convenience. We define the mesh of a grid as $\abs{\Delta^J} \equiv \max_{i = 1, \ldots, J}(t_i - t_{i-1})$ to measure its density. Under any sequence of increasingly dense grids $\Delta^J$ with $\abs{\Delta^J} \rightarrow 0$ as $J \rightarrow \infty$, we define $\theta_\infty \equiv \lim_{J\rightarrow\infty}\theta_J$, assuming the existence of this limit.

The ``identification bias" is the difference \[\delta_J = \theta_J - \eta.\] It has two constituent parts: the ``discretization bias'' and the ``asymptotic bias". The discretization bias describes the difference between the finite $J$ estimand and the limit estimand, quantifying the influence of measurement density. It converges to 0 by definition. The asymptotic bias is the remaining bias, which can result from unmeasured confounding trajectories or improper adjustment strategies. For instance, the na\"ive adjustment strategy's estimand converges to the factual mean instead of the counterfactual mean when treatment-confounder feedback exists, leaving non-zero asymptotic bias (see Remark \ref{rem:naive} for analytical details). By decomposing the identification bias into asymptotic and discretization biases, we can better understand its causes and work to reduce it.
\[
    \underbracket{\delta_J}_{\text{Identification Bias}} = \underbracket{\left\{\theta_J - \theta_\infty\right\}}_{\text{Discretization Bias}}+ \underbracket{\left\{\theta_\infty  - \eta\right\}}_{\text{Asymptotic Bias}}. 
\]

%%%%%%%%%%%%%%%%%%%%%%%%%

\section{Identification for a class of continuous-time stochastic linear DGPs}
\label{sec:OU}

In this section, we address the problem of identifying the counterfactual mean in the context of continuous-time linear DGPs, which are commonly used to model time-varying systems where the underlying process is assumed to be both continuous and linear over time. We show that the g-formula, a widely used identification method in discrete-time settings, has zero asymptotic bias in this class of DGPs. This means that as the number of time points in the data increases, the g-formula provides an increasingly accurate estimand approaching the true counterfactual mean, making it a suitable choice for analyzing dense longitudinal data generated by continuous-time linear DGPs. The proofs of these results can be found in Appendix A.

While our focus is on linear DGPs, this choice is motivated by a desire for simplification and clarity. By focusing on this class, we aim to provide a clear and intuitive presentation of our framework. Understanding the properties of identification methods under linear models is an essential step toward developing strategies that will work reliably in more complex and realistic models.

\subsection{Linear DGPs}

We consider a setting where both the treatment process $(W_t)$ and the outcome process $(Y_t)$ are both one-dimensional processes, and $\mathcal{D} = \{\rW_T, \rY_T\}$. We specify the underlying data generating process (DGP) using stochastic differential equations, which are widely used in fields such as physics, biology, and finance \citep{bonilla2019active, rohlfs2014modeling, leung2015optimal}:
\begin{equation}
    \dx{\begin{pmatrix*} Y_t \\ W_t \end{pmatrix*}} = -\boldsymbol{\beta}\begin{pmatrix*} Y_t \\ W_t \end{pmatrix*}\dx{t} + \boldsymbol{\sigma}\dx{\boldsymbol{B}_t},
    \label{eq:2dOU}
\end{equation}
with initial values $(Y_0,  W_0)^{\intercal}$. Here, $\boldsymbol{\beta}$ and $\boldsymbol{\sigma}$ are constant $2 \times 2$ matrices, and $\boldsymbol{B}_t$ is a 2-dimensional standard Brownian motion that is independent of $(Y_0, W_0)^{\intercal}$. We denote $\boldsymbol{B}_t = (B^{1}_{t}, B^{2}_{t})^{\intercal}, \boldsymbol{\beta} = \begin{pmatrix*} \beta_{11}, & \beta_{12} \\  \beta_{21}, & \beta_{22} \end{pmatrix*}$, and $\boldsymbol{\sigma} = \begin{pmatrix*} \sigma_{11}, & \sigma_{12} \\  \sigma_{21}, & \sigma_{22} \end{pmatrix*}$. Note that the process $Y$ is also a time-varying confounder.

Assumption \ref{assum:treat} below is a weak regularity condition that ensures that the treatment plan is well-behaved:
\begin{assum}
    The treatment function of interest $w^*: \mathcal{T} \rightarrow \R$ is Borel-measurable,  Riemann-integrable, and bounded on $\mathcal{T}$. % (i.e., $\exists M \in \R_{+}$, such that $|w^*_t| \le M, \forall t \in \mathcal{T}$).
    \label{assum:treat}
\end{assum}

The potential outcome process under a plan $\rw^*(T)$ that satisfies Assumption \ref{assum:treat} is the solution of the following time-inhomogeneous linear stochastic equation:
\begin{equation}
    \dx{Y}_t^{\rw^*} = -(\beta_{11}Y_t^{\rw^*} + \beta_{12}w^*_t)\dx{t} + \sigma_{11}\dx{B^1}_{t} + \sigma_{12}\dx{B^2}_{t},
    \label{eq:1dPOOU}
\end{equation}
where ${Y}_0^{\rw^*} \equiv Y_0$, since no treatment has occurred yet. For the above DGPs, we can solve exactly for the mean potential outcome. By \citet[Chapter~4]{gardiner1985handbook}, our estimand of interest is 
\[\eta \equiv \E[Y_T^{\rw^*}] = e^{-\beta_{11}T}\E[Y_0] - \beta_{12}\int_0^T w^*_s e^{\beta_{11}(s-T)}\dx{s}.\] 
Note that the term $(-\beta_{12})$ can be interpreted as the instantaneous effect of treatment on the outcome, and the integral term represents the cumulative effect of treatment over time.

The causal DAG in Figure \ref{fig:DAG} helps to illustrate the causal relationships between the treatment, outcome, and confounder processes in this DGP. The arrows indicate the direction of causality, and the coefficients on the arrows quantify the strength of this relationships.  
Figure \ref{fig:traj} shows simulated trajectories of the observable processes, as well as the potential outcome process under a constant treatment plan of $w_t^* \equiv 1$ for all $t$, with parameter specifications $\boldsymbol{\beta} = \begin{pmatrix*} 0.2, & -5\\ -3, & 0.5 \end{pmatrix*}, \boldsymbol{\sigma} = \begin{pmatrix*} 1, & 0.3\\ 0.3, & 0.5\end{pmatrix*}$.

\begin{figure}
    \centering
    \begin{subfigure}[t]{0.45\textwidth}
        \centering
        \includegraphics{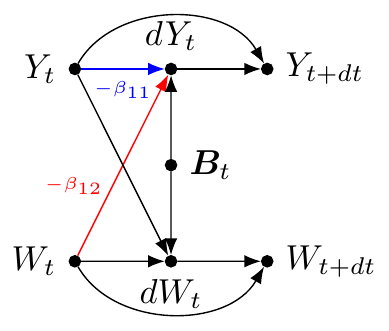}
        \caption{Observational process DAG} \label{fig:DAG1}
    \end{subfigure}
    \hfill
    \begin{subfigure}[t]{0.45\textwidth}
        \centering
        \includegraphics{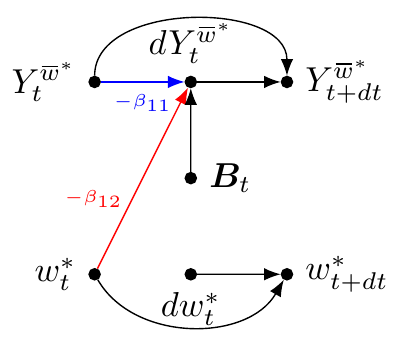}
        \caption{Counterfactual process DAG under a deterministic intervention $\rw^*$} \label{fig:DAG2}
    \end{subfigure}
    \caption{Causal directed acyclic graphs (DAG) representing the continuous-time linear DGPs  in an infinitesimal time interval $[t, t+\text{d}t]$. Panel (a) showcases the observational process where $W$ and $Y$ have feedback between each other. Panel (b), on the other hand, portrays the counterfactual process under a predetermined deterministic intervention, represented by $\rw^*$. In this panel, the arrows from $Y$ and $\mathbf{B}$ leading into $W$ are eliminated, due to the intervention. The red arrows present in both graphs symbolize the instantaneous effect of the treatment on the outcome, expressed as $-\beta_{12}$.}
    \label{fig:DAG}
\end{figure}

\begin{figure}
    \centering
    \includegraphics[width = 0.85\textwidth]{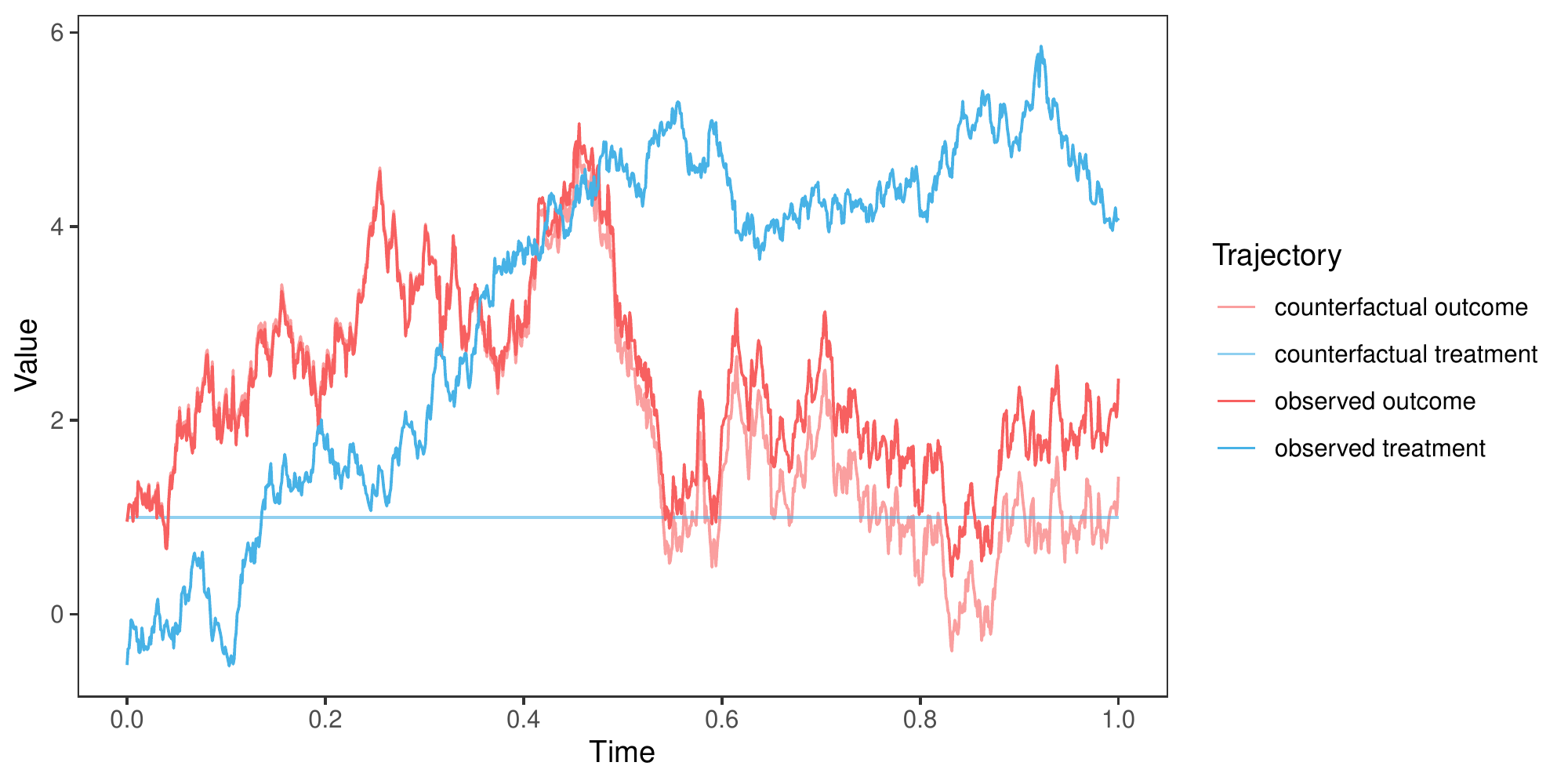}
    \caption{ Realizations of both observed and counterfactual trajectories under a predetermined constant treatment plan, denoted as $w_s^* \equiv 1$. Observed trajectories refer to the data collected under the original conditions, reflecting the natural course of events. On the other hand, counterfactual trajectories represent hypothetical scenarios under the constant treatment plan, showing what could have been the outcome if the treatment plan had been implemented consistently. This visualization aids in comparing the potential impact of maintaining a treatment strategy versus the outcomes observed under actual conditions.}
    \label{fig:traj}
\end{figure}

\subsection{Identification bias}

We will begin by considering the equidistant discretization grid $\Delta^J$, where $t_k = kT/J$. Suppose we have regular longitudinal data $\mathcal{D}^J$ on $\Delta^J$ under the DGP specified in Equation \eqref{eq:2dOU}. We apply the g-formula to the discrete-time data to investigate its identification bias. We refer to the g-formula estimand as $\theta^g$, and we define $\boldsymbol{\gamma}(k) = \begin{pmatrix*} \gamma_{11}(k) & \gamma_{12}(k) \ \gamma_{21}(k) & \gamma_{22}(k) \end{pmatrix*} \equiv e^{-\boldsymbol{\beta}\frac{T}{k}}$, where the right-hand side is a matrix exponential. Using Theorem \ref{thm:bias}, we can explicitly characterize $\delta_J$ of $\theta^g$.

\begin{thm}
    Under Assumption \ref{assum:treat} and the treatment plan $\rw^*_T$, the identification bias of the g-formula $\theta^g_J$ is
    \begin{equation}
        \delta_J = \left(\gamma_{11}^J(J) - e^{-\beta_{11}T}\right)\E[Y_0] + \gamma_{12}(J)\left(\sum_{i = 0}^{J-1}w^*_i \gamma_{11}^{J-i-1}(J)\right) + \beta_{12}\int_0^T w^*_s e^{\beta_{11}(s-T)}\dx{s}.
        \label{eq:bias}
    \end{equation}
    \label{thm:bias}
\end{thm}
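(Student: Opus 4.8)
The plan is to compute the g-formula estimand $\theta^g_J$ explicitly for the equidistant grid $\Delta^J$ under the Gaussian DGP of Equation \eqref{eq:2dOU}, and then subtract the known closed form $\eta = e^{-\beta_{11}T}\E[Y_0] - \beta_{12}\int_0^T w^*_s e^{\beta_{11}(s-T)}\dx{s}$. The key structural fact is that $(Y_{t_k}, W_{t_k})^\intercal$ is a Gaussian vector autoregression: solving \eqref{eq:2dOU} over one mesh interval of length $h = T/J$ gives $(Y_{t_{k+1}}, W_{t_{k+1}})^\intercal = e^{-\boldsymbol\beta h}(Y_{t_k}, W_{t_k})^\intercal + \boldsymbol\epsilon_{k+1}$, where $\boldsymbol\epsilon_{k+1}$ is a mean-zero Gaussian innovation independent of the past. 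Writing $\boldsymbol\gamma \equiv \boldsymbol\gamma(J) = e^{-\boldsymbol\beta T/J}$ as in the statement, the conditional means that enter the g-formula \eqref{eq:gforx} are linear: $\E[X_{k+1}\mid \rX_k = \rx_k, \rW_k = \rw_k]$ depends only on $(x_k, w_k)$ through $\gamma_{11}x_k + \gamma_{12}w_k$ (for the $Y$-coordinate), and similarly for the final conditional expectation $\E[Y_J \mid \rX_{J-1}=\rx_{J-1}, \rW_{J-1}=\rw_{J-1}]$.

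The main computation is to iterate the g-formula integral in \eqref{eq:gforx}. Because the integrand is affine in each $x_k$ and the transition kernels are Gaussian, each integration over $x_k$ replaces $x_k$ by its conditional mean, and the nuisance covariance terms integrate out and contribute nothing to the mean functional. Concretely, I would show by backward induction on $k$ that after integrating out $x_J$-through-$x_{k+1}$, the partial g-formula functional equals an affine function of $(x_k, w_k, \ldots, w_{J-1})$ with the $Y$-coefficient a power of $\gamma_{11}$; propagating this down to $k=-1$ (using $W_{-1}=X_{-1}=0$, so the $k=0$ step injects $\E[Y_0]$) yields
\[
\theta^g_J = \gamma_{11}^{J}\,\E[Y_0] + \gamma_{12}\sum_{i=0}^{J-1} w^*_i\,\gamma_{11}^{\,J-i-1},
\]
where $w^*_i = w^*_{t_i} = w^*_{iT/J}$. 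Subtracting $\eta$ gives exactly \eqref{eq:bias}: the first term becomes $(\gamma_{11}^J(J) - e^{-\beta_{11}T})\E[Y_0]$, the second term matches verbatim, and the $-(-\beta_{12}\int_0^T \cdots)$ from $-\eta$ becomes $+\beta_{12}\int_0^T w^*_s e^{\beta_{11}(s-T)}\dx{s}$.

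The step I expect to require the most care is justifying that the g-formula \emph{as a formal functional} \eqref{eq:gforx} is well-defined and that the iterated integration is legitimate here — i.e., that Sequential Positivity holds for the Gaussian transitions (the supports are full Euclidean space, so this is immediate once noted), and that the affine-integrand/Gaussian-kernel bookkeeping genuinely kills all second-order terms rather than leaving residual variance contributions. A secondary point worth stating cleanly is that $\theta^g_J$ here is the \emph{purported} estimand attached to the discrete treatment plan $(w^*_0,\ldots,w^*_{J-1})$ obtained by sampling $\rw^*$ on the grid, so that the comparison with $\eta$ is the intended one. Once the VAR representation and the linearity of conditional expectations are in hand, the induction is routine algebra; everything else is verifying regularity (measurability and boundedness of $w^*$ from Assumption \ref{assum:treat} ensure the Riemann sum $\gamma_{12}\sum_i w^*_i \gamma_{11}^{J-i-1}$ and the integral in $\eta$ are both finite and behave well).
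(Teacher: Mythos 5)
Your proposal is correct and follows essentially the same route as the paper: both derive the exact one-step linear conditional mean $\E[Y_k\mid \rY_{k-1},\rW_{k-1}] = \gamma_{11}(J)Y_{k-1} + \gamma_{12}(J)w^*_{k-1}$ from the matrix-exponential solution of the SDE over a mesh interval, iterate it through the g-formula to obtain $\theta^g_J = \gamma_{11}^J(J)\E[Y_0] + \gamma_{12}(J)\sum_{i=0}^{J-1}w^*_i\gamma_{11}^{J-i-1}(J)$, and subtract the closed form of $\eta$. The additional care you flag about positivity and the Gaussian bookkeeping is fine but not something the paper dwells on.
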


Theorem \ref{thm:bias} shows that the identification bias depends on the number of repeated measurements, the causal magnitude $\boldsymbol{\beta}$, the treatment plan $\rw^*$, and the study length $T$. In Figure \ref{fig:bias}, we show the identification bias under different parameter specifications given the constant treatment plan $w_s^* \equiv 1$. Note that the identification bias is generally non-zero for any given $J$, except in special cases, e.g., when the sharp causal null hypothesis is true, as stated in Corollary \ref{cor:sharp_null}. Figure \ref{fig:bias} also suggests that $\delta_J$ goes to zero when $J$ increases. We formally show this result in Theorem \ref{thm:asymp_unbias}, which indicates that the g-formula is asymptotically unbiased for the linear DGPs specified above.

\begin{cor}
    Under the setting in Theorem \ref{thm:bias}, $\delta_J = 0$ for all $J$ if $\beta_{12} = 0$.
    \label{cor:sharp_null}
\end{cor}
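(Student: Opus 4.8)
The plan is to read off the conclusion directly from the closed form in Equation~\eqref{eq:bias}, using the fact that the hypothesis $\beta_{12}=0$ forces the drift matrix to be lower triangular. First I would note that when $\beta_{12}=0$ we have $\boldsymbol{\beta} = \begin{pmatrix*} \beta_{11} & 0 \\ \beta_{21} & \beta_{22} \end{pmatrix*}$, hence $-\boldsymbol{\beta}\tfrac{T}{J}$ is lower triangular. The matrix exponential of a lower-triangular matrix is again lower triangular, with diagonal entries equal to the exponentials of the original diagonal entries; this follows immediately from the power-series definition of $\exp$, since lower-triangular matrices are closed under products and the $(i,i)$ entry of a product of triangular matrices is the product of their $(i,i)$ entries. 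Applying this to $\boldsymbol{\gamma}(J)=e^{-\boldsymbol{\beta}T/J}$ yields $\gamma_{12}(J)=0$ and $\gamma_{11}(J)=e^{-\beta_{11}T/J}$.

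Next I would substitute these two identities into \eqref{eq:bias} and examine each of the three summands. In the first summand, $\gamma_{11}^{J}(J)=\bigl(e^{-\beta_{11}T/J}\bigr)^{J}=e^{-\beta_{11}T}$, so the coefficient $\gamma_{11}^{J}(J)-e^{-\beta_{11}T}$ is zero and the term vanishes irrespective of $\E[Y_0]$. In the second summand the prefactor $\gamma_{12}(J)=0$ annihilates the entire product $\sum_{i=0}^{J-1} w^*_i\gamma_{11}^{J-i-1}(J)$ for any admissible treatment plan $\rw^*$. In the third summand the prefactor is literally $\beta_{12}=0$. Therefore $\delta_J=0$, and since $J$ was arbitrary this holds for all $J$, which is the claim.

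I do not expect any genuine obstacle: the corollary is a pure specialization of Theorem~\ref{thm:bias}, and the only point that merits being spelled out is the elementary linear-algebra fact about exponentials of triangular matrices. As a sanity check one can also see the result at the level of the DGP: when $\beta_{12}=0$ the outcome equation in \eqref{eq:2dOU} does not involve $W_t$, so the potential-outcome dynamics \eqref{eq:1dPOOU} coincide with the observed outcome dynamics, the counterfactual mean $\eta$ equals the factual mean of $Y_T$, and the g-formula returns this same quantity on every grid; I would present the algebraic argument as the main proof since it is shortest given Theorem~\ref{thm:bias} and relegate this probabilistic reading to a remark.
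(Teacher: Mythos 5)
Your proof is correct, and it reaches the same two key identities as the paper --- $\gamma_{12}(J)=0$ and $\gamma_{11}(J)=e^{-\beta_{11}T/J}$ --- before substituting into Equation~\eqref{eq:bias}, but it derives them by a genuinely different and more elementary route. The paper invokes the Cayley--Hamilton theorem to write $e^{t\boldsymbol{\beta}}=s_0(t)I+s_1(t)\boldsymbol{\beta}$ with $s_0,s_1$ expressed through the eigenvalues $\lambda_1,\lambda_2$ of $\boldsymbol{\beta}$, reads off $\gamma_{12}(J)=s_1(-T/J)\beta_{12}=0$, and then needs a case split ($\beta_{11}\ne\beta_{22}$ versus $\beta_{11}=\beta_{22}$) to verify $\gamma_{11}(J)=e^{-\beta_{11}T/J}$. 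Your observation that $\beta_{12}=0$ makes $-\boldsymbol{\beta}T/J$ lower triangular, so its exponential is lower triangular with diagonal entries $e^{-\beta_{11}T/J}$ and $e^{-\beta_{22}T/J}$, gets both identities at once with no case analysis and no spectral computation; for this corollary in isolation it is the cleaner argument. What the paper's heavier machinery buys is reuse: the functions $s_0,s_1$ and their limiting behavior as $t\to 0$ are exactly what drives the proof of Theorem~\ref{thm:asymp_unbias}, so introducing them here amortizes the setup. Your closing sanity check at the level of the DGP is also sound (with $\beta_{12}=0$ the interventional and observational outcome dynamics coincide), and relegating it to a remark is the right call.
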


\begin{figure}
    \centering
    \includegraphics[width = 0.7\textwidth]{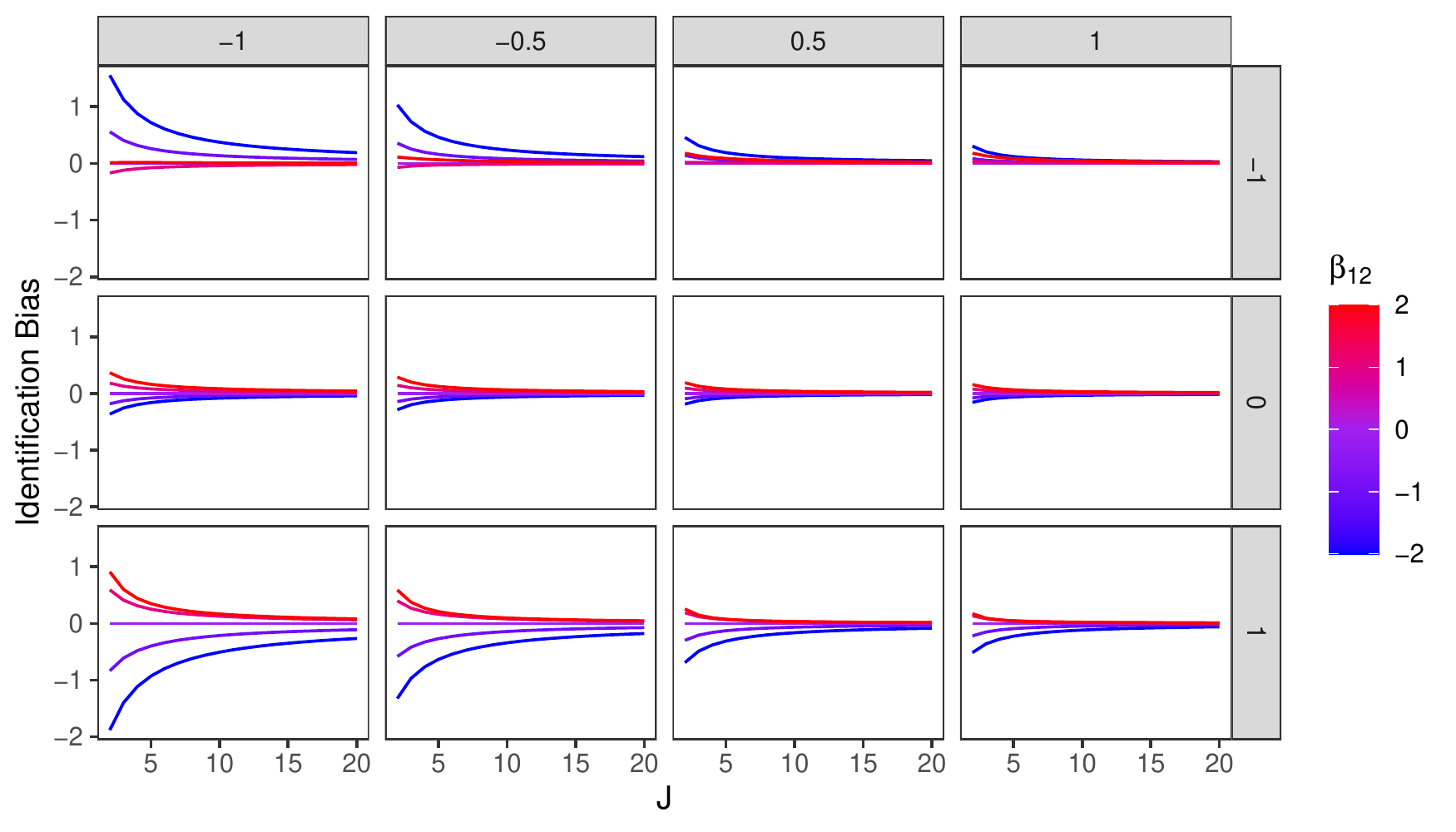}
    \caption{The figure illustrates the identification bias of the g-formula, denoted as $\theta^g_J - \eta$, under the framework of linear Data-Generating Processes (DGPs). The bias is presented in a grid format, where each row and column corresponds to different values of $\beta_{21}$ and $\beta_{11}$ respectively. These values represent specific parameters within the linear DGP model. The causal parameter, designated as $\beta_{12}$, is set at varying levels, specifically $-2, -1, 0, 1, 2$. This figure provides a visual representation of how changes in the parameters $\beta_{21}$, $\beta_{11}$, and $\beta_{12}$ can affect the identification bias of the g-formula in the context of linear DGPs. In particular, they suggest that the identification bias tends to zero as $J$ grows. (Other parameters in the model have been assigned specific values for the purpose of this illustration: $T = 1, w_s^* \equiv 1, \beta_{22} = 0.5, \E[Y_0] = \E[Y_0^{\rw^*}] = 1$.)}
    \label{fig:bias}
\end{figure}

\begin{thm} %[Asymptotic unbiasedness of g-formula]
    Under the setting in Theorem \ref{thm:bias},
    \[\lim_{J \rightarrow \infty} \theta^g_J = \eta.\]
    \label{thm:asymp_unbias}
\end{thm}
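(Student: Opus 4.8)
The plan is to take the closed-form expression for $\delta_J$ supplied by Theorem~\ref{thm:bias} and show it vanishes as $J \to \infty$; since $\theta^g_J = \eta + \delta_J$, this is exactly the claim. The driving fact is that the one-step transition matrix $\boldsymbol{\gamma}(J) = e^{-\boldsymbol{\beta}T/J}$ is a small perturbation of the identity: expanding the matrix exponential to first order gives, entrywise, $\gamma_{11}(J) = 1 - \beta_{11}T/J + O(1/J^2)$ and $\gamma_{12}(J) = -\beta_{12}T/J + O(1/J^2)$, with deterministic remainders depending only on $\boldsymbol{\beta}$ and $T$. I would then treat the three summands of \eqref{eq:bias} separately.

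First, for $\big(\gamma_{11}^J(J) - e^{-\beta_{11}T}\big)\E[Y_0]$: writing $\gamma_{11}(J) = 1 + a_J/J$ with $a_J := J(\gamma_{11}(J) - 1) \to -\beta_{11}T$, the elementary limit $(1 + a_J/J)^J \to e^{-\beta_{11}T}$ shows this summand tends to $0$.

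Second, I would show the middle summand $\gamma_{12}(J)\sum_{i=0}^{J-1} w^*_i\, \gamma_{11}^{J-i-1}(J)$ converges to $-\beta_{12}\int_0^T w^*_s e^{\beta_{11}(s-T)}\dx{s}$, which is precisely the negative of the third summand, so the two cancel in the limit. Rewrite the middle summand as $\big(\tfrac{J}{T}\gamma_{12}(J)\big)\cdot\big(\tfrac{T}{J}\sum_{i=0}^{J-1} w^*_i\, \gamma_{11}^{J-i-1}(J)\big)$; the first factor equals $-\beta_{12} + O(1/J) \to -\beta_{12}$. For the second factor I would establish the uniform bound $\max_{0 \le i \le J-1}\big|\gamma_{11}^{J-i-1}(J) - e^{-\beta_{11}(T-t_i)}\big| = O(1/J)$: in logarithms, $(J-i-1)\log\gamma_{11}(J)$ and $-\beta_{11}(T-t_i) = (J-i)(-\beta_{11}T/J)$ differ by $O(1/J)$ uniformly in $i$ (using $\log\gamma_{11}(J) = -\beta_{11}T/J + O(1/J^2)$ and $J - i - 1 \le J$), and both exponents stay in a fixed compact interval on which $x \mapsto e^x$ is Lipschitz. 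Since $|w^*|$ is bounded by Assumption~\ref{assum:treat}, replacing $\gamma_{11}^{J-i-1}(J)$ by $e^{-\beta_{11}(T-t_i)}$ in $\tfrac{T}{J}\sum_{i=0}^{J-1}$ perturbs the value by $O(1/J)$; what remains, $\tfrac{T}{J}\sum_{i=0}^{J-1} w^*(t_i)\, e^{-\beta_{11}(T-t_i)}$, is a left-endpoint Riemann sum of mesh $T/J \to 0$ for $\int_0^T w^*_s e^{-\beta_{11}(T-s)}\dx{s}$, and converges to that integral because $s \mapsto w^*_s e^{-\beta_{11}(T-s)}$ is Riemann-integrable, being a product of the Riemann-integrable $w^*$ and a continuous function. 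Hence the second factor tends to $\int_0^T w^*_s e^{\beta_{11}(s-T)}\dx{s}$ and the middle summand to $-\beta_{12}\int_0^T w^*_s e^{\beta_{11}(s-T)}\dx{s}$.

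Adding the limits, $\delta_J \to 0 + (-\beta_{12}I^*) + \beta_{12}I^* = 0$ where $I^* := \int_0^T w^*_s e^{\beta_{11}(s-T)}\dx{s}$, so $\theta^g_J \to \eta$. The step I expect to be the main obstacle is the uniform estimate in the middle summand: the per-step error in $\gamma_{11}(J) \approx 1 - \beta_{11}T/J$ is only $O(1/J^2)$, yet it is compounded over as many as $J$ factors and then averaged over $J$ indices, so one must verify carefully that the total perturbation of $\tfrac{T}{J}\sum_i w^*_i\, \gamma_{11}^{J-i-1}(J)$ is still $o(1)$, while simultaneously invoking Riemann integrability of $w^*$ (rather than continuity) for the limiting Riemann sum. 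The remaining steps are routine bookkeeping with Taylor expansions of matrix exponentials.
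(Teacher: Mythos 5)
Your proposal is correct and follows essentially the same route as the paper: split the closed form of $\delta_J$ from Theorem~\ref{thm:bias} into its three summands, dispose of the first via $(1+a_J/J)^J \to e^{-\beta_{11}T}$, and show the middle summand equals $(-\beta_{12}+o(1))$ times a left-endpoint Riemann sum converging to $\int_0^T w^*_s e^{\beta_{11}(s-T)}\dx{s}$, cancelling the third. If anything, your uniform $O(1/J)$ estimate on $\gamma_{11}^{J-i-1}(J)-e^{-\beta_{11}(T-t_i)}$ makes rigorous the one step the paper passes over quickly --- its interchange of limit and sum for the $J$-dependent integrand $f_J$, justified there only by an appeal to ``properties of Riemann-integrable functions'' --- so there is no gap.
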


\begin{remark}{(Identification bias of na\"ive adjustments.)}
    With the same data $\mathcal{D}_J$, the na\"ive adjustment strategy $\tilde{\theta}$ assumes $Y_J^{\rw^*} \indep \rW_{J-1}|\rY_{J-1}$, and $\tilde{\theta}_J = \E\E[Y_J|\rY_{J-1}, \rW_{J-1} = \rw^*_{J-1}]$. Straightforward calculations give 
    \begin{equation*}
        \begin{aligned}
            \tilde{\theta}_J &= \gamma_{12}(J)w_{J-1}^* + \gamma_{11}(J)\left[\gamma_{11}\left(\frac{J}{J-1}\right)\E(Y_0) + \gamma_{12}\left(\frac{J}{J-1}\right)\E(W_0)\right],\\
            \tilde{\theta}_\infty &\equiv \lim_{J \rightarrow \infty} \tilde{\theta}_J = \gamma_{11}(1)\E(Y_0) + \gamma_{12}(1)\E(W_0) = \E[Y_T] \ne \E[Y_T^{\rw^*}],
        \end{aligned}
    \end{equation*}
    showing that na\"ive adjustments are \emph{not} asymptotically unbiased.
    \label{rem:naive}
\end{remark}

\begin{remark}{(Confidence intervals under finite sample.)}
In practice, given a finite sample size $n$, an estimator for an asymptotically unbiased identification strategy will often have decreasing bias as $J$ increases. However, confidence intervals are often used to quantify the statistical uncertainty for the estimate, and in the presence of discretization bias can be misleading, with the actual coverage probability for the true estimand much lower than the nominal probability. This emphasizes the importance to study discretization and novel bias-corrected confidence intervals with validity.
\end{remark}

%%%%%%%%%%%%%%%%%%%%%%%%%%%%%%%%%%%

\section{Heuristic sensitivity analysis on discretization}

To entirely eliminate discretization bias for variables that evolve continuously over time, it is essential to have complete trajectory datasets. However, depending on the variable's volatility, a sufficiently large number of measurements, $J_0$, can capture crucial variations while introducing only negligible discretization bias. When empirical knowledge for determining $J_0$ is lacking, conducting sensitivity analysis for $J$ is vital to ensure that conclusions do not rely unduly on specific discretization choices. \citet{lange2019commentary} proposed a heuristic sensitivity analysis method, suggesting that if $J$ is sufficiently large, a marginally smaller $J'$ would produce similar estimates. Thus, comparing estimates using $J$ and $J'$ can indicate whether $J$ is large enough. It is important to note, however, that according to our proposed framework, this approach's effectiveness depends on the implicit assumption that the identification strategy functional, $\theta_J$, has a finite limit under the continuous-time data-generating process (DGP), as estimate stability would otherwise be unachievable. 

Determining whether the estimates change ``substantially'' when comparing them under different discretization scales can be somewhat subjective. In the following, we propose a simple and heuristic quantitative measure for sensitivity concerning discretization. We will demonstrate the proposed approach through simulations.

\subsection{Discretization sensitivity measure $\zeta$}

We introduce a discretization sensitivity measure, denoted as $\zeta$. This measure is designed to evaluate the potential impact of discretization bias on the estimated effects within a study. The value of $\zeta$ provides an indication of how strongly the potential discretization bias could potentially alter or ``explain away'' the observed effects in the study. A low value would suggest that the observed effects are highly sensitive to discretization bias. Moreover, by providing a common metric, the $\zeta$ measure facilitates the comparison and discussion of sensitivity across different studies, enhancing the interpretability and comparability of research findings. The formal definition is as follows.

Suppose that the estimand of interest, $\tau$, is the expectation of a contrast, such as the contrast between two potential outcomes. Specifically, the null hypothesis is $H_0: \tau = 0$. Under a finite $J$, the corresponding discrete-time estimand is denoted as $\tau_J$. Let $\hat\tau_J$ be the point estimate and $[\hat\tau_J^l, \hat\tau_J^u]$ be the $1-\alpha$ confidence interval (CI) for $\tau_J$. To address the issue of discretization bias, we propose a sensitivity measure, $\zeta$, defined as follows:
\[\zeta \equiv 
\begin{cases}
    \frac{\min\{|\hat\tau_J^l|, |\hat\tau_J^u|\}}{|\hat\tau_J - \hat\tau_{J/2}|} & \text{if } 0 \notin [\hat\tau_J^l, \hat\tau_J^u]\\
    0 & \text{otherwise} \\
\end{cases},
\]
Here, $\hat\tau_{J/2}$ is the point estimate for $\tau_{J/2}$ obtained by applying the original analysis plan using half of the time points equidistantly. Heuristically, a larger $\zeta$ implies stronger evidence supporting the alternative hypothesis $\tau \ne 0$. 
This is because with a sufficiently large $J$, $|\hat\tau_J - \hat\tau_{J/2}|$ will be close to zero. The denominator $|\hat\tau_J - \hat\tau_{J/2}|$ serves as a proxy for the magnitude of discretization bias for $J$ and normalizes the current ``effect size''. Note that $\zeta$ is intentionally designed to inform possible Type I error. We suggest that researchers report point and interval estimates, along with $\zeta$, when discretization poses a potential threat to the credibility of the results. Researchers should be cautious when $\zeta$ is small.

\subsection{Simulations}

To demonstrate the behavior of $\zeta$ at varying discretization scales and parameter values, we conducted simulations using the continuous-time DGP specified by Equation \eqref{eq:2dOU}. The parameters were set as follows:
\[ \boldsymbol{\beta} = \begin{pmatrix*} 0.2, & \beta_{12} \\  -3, & 0.5 \end{pmatrix*}, \boldsymbol{\sigma} = \begin{pmatrix*} 1, & 0.3 \\  0.3, & 0.5 \end{pmatrix*},  \begin{pmatrix*} Y_0 \\ W_0\end{pmatrix*} \sim \text{Normal}\left( \begin{pmatrix*} 1 \\ 0\end{pmatrix*}, \begin{pmatrix*} 0.25, & 0 \\  0, & 0.25 \end{pmatrix*} \right).\]
We varied the instantaneous effect $\beta_{12}$ from $-10$ to $-3$. The study ended at $T = 1$, and we considered a treatment plan of interest, $w_t^* \equiv 1$, and a baseline treatment, $w_t^\circ \equiv 0$. The true causal estimand was $\tau = \E[Y^{\rw^*}_T] - \E[Y^{\rw^\circ}_T]$. We recorded $J+1$ equally spaced time points in the study, with $J$ ranging from 4 to 40. The analysis included a sample of $n = 200$ individuals. To model $\E[Y_k|Y_{k-1}, W_{k-1}]$, we employed a linear model with the specification $ Y_k \sim Y_{k-1} + W_{k-1}$. The study's confidence level was set at $1- \alpha = 0.95$, and we used $N = 500$ bootstrap samples to estimate the confidence intervals.
The results, depicted in Figure \ref{fig:zeta}, showed that $\zeta$ tended to increase steadily with $J$, although with small fluctuations attributable to finite sample variability.

\begin{figure}
    \centering
    \includegraphics[width = 0.8\linewidth]{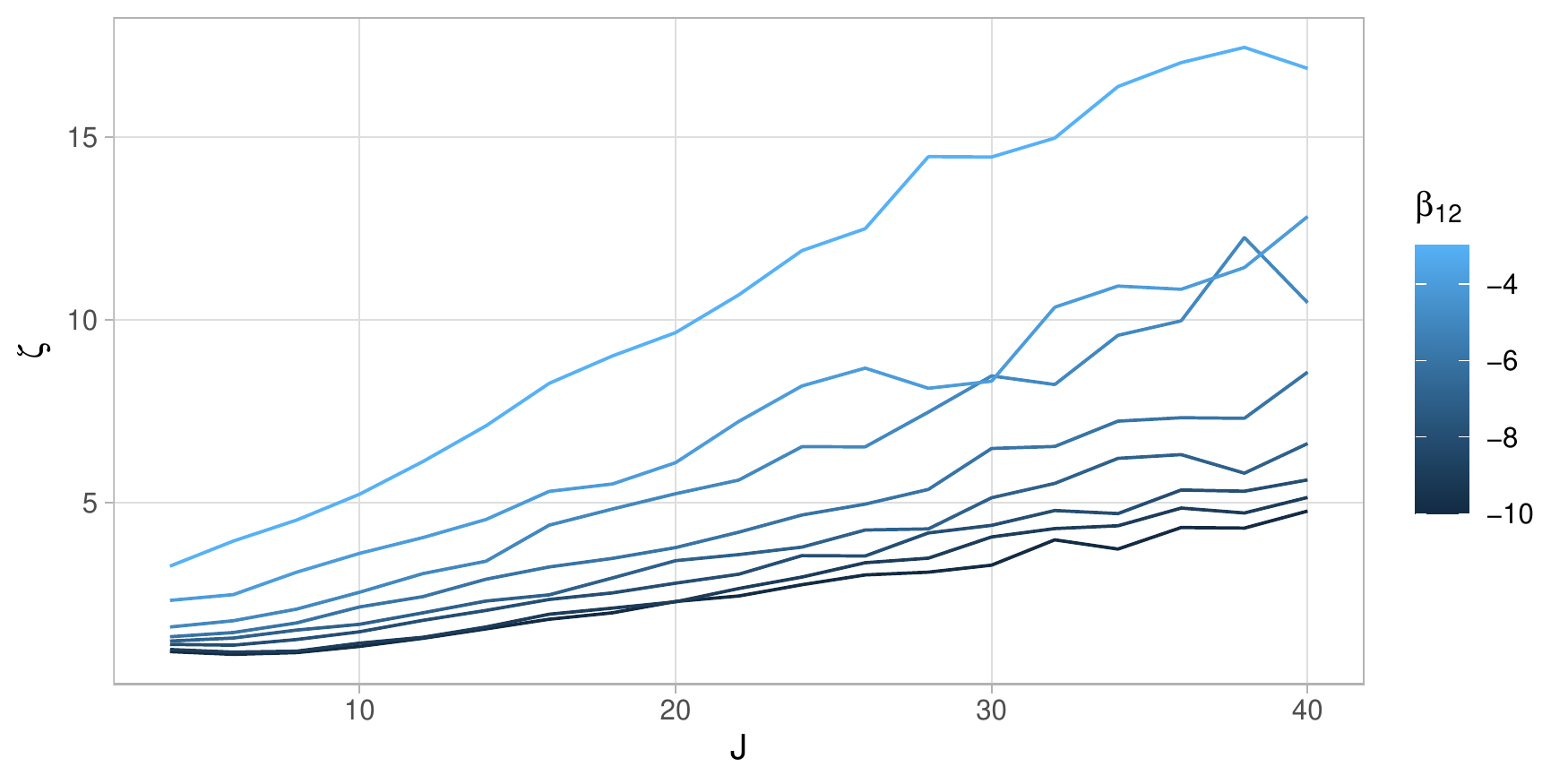}
    \caption{A visual representation of simulation results, illustrating how the discretization sensitivity measure, represented as $\zeta$, changes with different values of $J$ and the causal parameter $\beta_{12}$. The measure $\zeta$ describes the extent to which potential discretization bias can explain away estimated effects, and facilitates comparisons and discussions of sensitivity across studies. The figure demonstrates that $\zeta$ generally exhibits an increasing trend with an augmentation in $J$.}
    \label{fig:zeta}
\end{figure}

\section{Discussion}

As standard errors in large-scale longitudinal studies continue to decrease, it is increasingly important to identify and address potential sources of bias. In this paper, we investigate discretization bias and asymptotic bias in observational studies using g-methods under continuous-time data-generating processes (DGPs). Given the complexity of the general problem, it is essential to analyze specific instances that can yield at least partial progress. Therefore, we focus on linear DGPs and find that the g-formula is an effective identification method for counterfactual means in such DGPs, providing increasingly accurate estimands as the number of measurements grows. With the advancement of technology and the decreasing cost of measurements, this type of dense data is becoming more available.

Discretization is related to the broader literature on missing data and coarsening \citep{tsiatis2006semiparametric}, and should be handled with equal care. We recommend that researchers incorporate an analysis plan for discretization in their protocol prior to the primary analysis and conduct sensitivity analysis afterward.

Several open questions remain, including:
\begin{enumerate}
    \item Without imposing strict assumptions on the underlying continuous-time DGP, identification bias for discrete-time methods is generally inevitable. The only apparent solution to reduce identification bias to zero is to gather denser observations. Therefore, understanding when asymptotic bias is zero is crucial for statisticians and empirical researchers. We demonstrate that the g-formula exhibits zero asymptotic bias for a class of commonly used linear DGPs. However, asymptotically unbiased identification strategies for nonlinear and/or non-Markovian DGPs are also of significant interest. Recent work on causal identification approaches that address continuous-time DGPs directly may offer theoretical foundations and practical tools to answer this question \citep[e.g.,][]{sun2022causal}.
    \item Many electronic health record-based longitudinal studies feature irregular observations, which introduce both discretization bias and potential selection bias due to informative visits \citep{harton2022informative, pullenayegum2022randomized}. Further research in this area is necessary and highly relevant to real-world situations. 
    \item The number of subjects ($n$) and the number of repeated measurements ($J$) have a significant impact on the bias and variance of the estimator. In situations where resources are limited, developing new longitudinal designs that achieve an    ``optimal" bias-variance tradeoff would be highly advantageous for researchers.
\end{enumerate}

\textbf{Acknowledgements}: We are grateful to Fan Li and P. M. Aronow for helpful comments and discussion. This work was supported by NIH grant 1DP2HD091799-01.

%%%%%%%%%%%%%%%%%%%%%%%%%%%%%%%%%%%%%%%5

\section*{Appendix A: Proofs}

\subsection*{Theorem \ref{thm:bias}}
\begin{proof}[Proof of Theorem \ref{thm:bias}.]
    It is a fact \citep[Chapter~4.4.6]{gardiner1985handbook} that
    \begin{equation}
        \E\left[\begin{pmatrix*} Y_t \\ W_t \end{pmatrix*}\right] = e^{-\boldsymbol{\beta}t}\E\left[\begin{pmatrix*} Y_0 \\ W_0 \end{pmatrix*}\right],
        \label{eq:2dexpected}
    \end{equation}
    where $e^{-\boldsymbol{\beta}t}$ is a matrix exponential, and (c.f. \citep[Chapter~4.4.7]{gardiner1985handbook})
    \begin{equation}
        \eta =  e^{-\beta_{11}T}\E[Y_0] - \beta_{12}\int_0^T w^*_s e^{\beta_{11}(s-T)}\dx{s}.
        \label{eq:true_estimand}
        % constant treatment case: \E[{Y}_t^{\rw^*(T)}] = e^{-\beta_{11}t}\E[Y_0] - \frac{\beta_{12}}{\beta_{11}}w_0(1 - e^{-\beta_{11}t}).
    \end{equation}
    
    For treatment plan $\rw^*_{J-1} = (w^*(t_0), \ldots, w^*(t_{J-1}))$, by g-formula Equation \eqref{eq:gforx}, we have
    \begin{equation}
        \begin{aligned}
        \theta^g_J = \int_{y_0}\ldots\int_{y_{J-1}} &\E[Y_J|\rY_{J-1} = \ry_{J-1}, \rW_{J-1} = \rw^*_{J-1}] \\ &\times\prod_{k = 0}^{J-1}\Pr(Y_k \in \dx{y}_k|\rY_{k-1} = \ry_{k-1}, \rW_{k-1} = \rw^*_{k-1}).
        \end{aligned}
        \label{eq:gforz}
    \end{equation}
    Then, by the Markov property, temporal homogeneity, and Equation \eqref{eq:2dexpected}, 
    \[\E[Y_k|\rY_{k-1}, \rW_{k-1}] = (1,0)^{\intercal}e^{-\boldsymbol{\beta}\frac{T}{J}}\begin{pmatrix*} Y_{k-1} \\ W_{k-1} \end{pmatrix*}.\] 
    Therefore, 
    \begin{equation}
        \E[Y_k|\rY_{k-1}, \rW_{k-1} = \rw^*_{k-1}] = \gamma_{11}(J)Y_{k-1} + \gamma_{12}(J)w^*_{k-1}.
        \label{eq:update}
    \end{equation}
    Plug in Equation \eqref{eq:update} to \eqref{eq:gforz} iteratively, then we have the estimand
    \[ \theta^g_J  =\gamma_{11}^J(J)\E[Y_0] + \gamma_{12}(J)\left(\sum_{i = 0}^{J-1}w^*_i \gamma_{11}^{J-i-1}(J)\right). \]
    Contrasted the above with Equation \eqref{eq:true_estimand}, we hence finished the proof.
\end{proof}

\subsection*{Corollary \ref{cor:sharp_null}}

\begin{proof}[Proof of Corollary \ref{cor:sharp_null}.]
    Suppose the two eigenvalues of $\boldsymbol{\beta}$ are $\lambda_1, \lambda_2$. Then by Cayley-Hamilton theorem in linear algebra, 
    \[e^{t\boldsymbol{\beta}} = s_0(t)I + s_1(t)\boldsymbol{\beta},\]
    where $s_0(t) = \frac{\lambda_1e^{\lambda_2 t} - \lambda_2e^{\lambda_1 t}}{\lambda_1 - \lambda_2}, s_1(t) = \frac{e^{\lambda_1 t} - e^{\lambda_2 t}}{\lambda_1 - \lambda_2}$ when $\lambda_1 \neq \lambda_2$; when $\lambda_1 = \lambda_2$, $s_0(t) = (1-\lambda_1 t)e^{\lambda_1 t}, s_1(t) = te^{\lambda_1 t}$. Then, $\forall \lambda_1, \lambda_2$ (real or complex), we have $\gamma_{11}(J) = s_0(-T/J) + s_1(-T/J)\beta_{11},$ $\gamma_{12}(J) = s_1(-T/J)\beta_{12}$.

    Since $\beta_{12} = 0$, we have $\lambda_1 = \beta_{11}, \lambda_2 = \beta_{22}$. Then, it is obvious that $\gamma_{12}(J) = 0$.\\
    (i) When $\beta_{11} \ne  \beta_{22}$,
    \begin{equation*} 
        \begin{aligned}
            \gamma_{11}(J) &= \frac{\lambda_1 e^{-T\lambda_2/J} - \lambda_2 e^{-T\lambda_1/J}}{\lambda_1 - \lambda_2} + \beta_{11}\frac{e^{-T\lambda_1/J } - e^{-T\lambda_2/J}}{\lambda_1 - \lambda_2}\\
        & = e^{-\beta_{11}T/J}.
        \end{aligned}
    \end{equation*}
    Thus, $\gamma_{11}^J(J) - e^{-\beta_{11}T} = 0$.

    (ii) When $\beta_{11} = \beta_{22}$, one can similarly show that $\gamma_{11}^J(J) - e^{-\beta_{11}T} = 0$, and we thus omit the details.

    Then, three terms in Equation \eqref{eq:bias} are all equal to 0, and thus $\delta_J = 0$. We hence finished the proof.
\end{proof}

\subsection*{Theorem \ref{thm:asymp_unbias}}
\begin{proof}[Proof of Theorem \ref{thm:asymp_unbias}.]
    % We first lay down some algebra and calculus results that will be used later.
    Note that $\lim_{|t| \rightarrow 0}s_0(t) = 1, \lim_{|t| \rightarrow 0}s_1(t)  = 0, \frac{\dx{s_0(t)}}{\dx{t}}|_{t = 0} = 0, \frac{\dx{s_1(t)}}{\dx{t}}|_{t = 0}  = 1, \lim_{J \rightarrow \infty} Js_1(-\frac{T}{J}) = -T, \lim_{J \rightarrow \infty} \gamma_{11}(J) = 1.$
    
    Since 
    \begin{equation*}
        \begin{aligned}
    \left(\lim_{J \rightarrow \infty} \theta^g_J\right) -  \eta & = \left\{\E[Y_0](\lim_{J \rightarrow \infty} \gamma_{11}^J(J) - e^{-\beta_{11}T})\right\} \\ & + \left\{\lim_{J \rightarrow \infty}\gamma_{12}(J)\left(\sum_{i = 0}^{J-1}w_i^* \gamma_{11}^{J-i-1}(J)\right) + \beta_{12}\int_0^T w^*_s e^{\beta_{11}(s-T)}\dx{s} \right\},
        \end{aligned}
    \end{equation*}
    we consider the terms in the first and second braces separately.

(1) It is easy to verify that $\lim_{J \rightarrow \infty} J\log(\gamma_{11}(J)) = -\beta_{11}T, \forall \lambda_1, \lambda_2$, using L'Hospital's rule. Thus, $\lim_{J \rightarrow \infty} \gamma_{11}^J(J) = e^{-\beta_{11}T}$.

(2) Next, we define $\epsilon_J \equiv \lim_{J \rightarrow \infty}\gamma_{12}(J)\left(\sum_{i = 0}^{J-1}w_i^* \gamma_{11}^{J-i-1}(J)\right) + \beta_{12}\int_0^T w^*_s e^{\beta_{11}(s-T)}\dx{s}$. 

We have 
\begin{equation*}
    \begin{aligned}
        \lim_{J \rightarrow \infty}\epsilon_J 
        & \equiv \lim_{J \rightarrow \infty} J\gamma_{12}(J)\left(\frac{1}{J} \sum_{i = 0}^{J-1}w_i^* \gamma_{11}^{J-i-1}(J) \right) + \beta_{12}\int_0^T w^*_s e^{\beta_{11}(s-T)}\dx{s} \\
        & = \lim_{J \rightarrow \infty} (J\gamma_{12}(J)) \lim_{J \rightarrow \infty} \left(\frac{1}{J} \sum_{i = 0}^{J-1}w_i^* \gamma_{11}^{J-i-1}(J) \right) + \beta_{12}\int_0^T w^*_s e^{\beta_{11}(s-T)}\dx{s} \\
        & = \lim_{J \rightarrow \infty} (J\beta_{12}s_1(-\frac{T}{J})) \lim_{J \rightarrow \infty} \left(\frac{1}{J} \sum_{i = 0}^{J-1}w_i^* \gamma_{11}^{J-i-1}(J) \right) + \beta_{12}\int_0^T w^*_s e^{\beta_{11}(s-T)}\dx{s} \\
        & = -T\beta_{12} \lim_{J \rightarrow \infty} \left(\frac{1}{J} \sum_{i = 0}^{J-1}w_i^* \gamma_{11}^{J-i-1}(J) \right) + \beta_{12}\int_0^T w^*_s e^{\beta_{11}(s-T)}\dx{s} \\
        & = \beta_{12} \left[ - \lim_{J \rightarrow \infty} \left(\frac{T}{J} \sum_{i = 0}^{J-1}w_i^* \gamma_{11}^{J-i-1}(J) \right) + \int_0^T w^*_s e^{\beta_{11}(s-T)}\dx{s} \right]\\
        & = \beta_{12} \left[ - \lim_{J \rightarrow \infty} \gamma_{11}^J(J) \lim_{J \rightarrow \infty} \left(\frac{T}{J} \sum_{i = 0}^{J-1}w_i^* \gamma_{11}^{-i-1}(J) \right) + \int_0^T w^*_s e^{\beta_{11}(s-T)}\dx{s} \right]\\
        & = \beta_{12} \left[ - e^{-\beta_{11}T} \lim_{J \rightarrow \infty} \left(\frac{T}{J} \sum_{i = 0}^{J-1}w_i^* \gamma_{11}^{-i-1}(J) \right) + \int_0^T w^*_s e^{\beta_{11}(s-T)}\dx{s} \right]\\
        & = \beta_{12}e^{-\beta_{11}T} \left[ -  \lim_{J \rightarrow \infty} \left(\frac{T}{J} \sum_{i = 0}^{J-1}w_i^* \gamma_{11}^{-i-1}(J) \right) + \int_0^T w^*_s e^{\beta_{11}s}\dx{s} \right]\\
    \end{aligned}
\end{equation*}

Define $t_i = \frac{T}{J} i, f_J(x) = w^*(x)\gamma_{11}^{-\frac{J}{T}x - 1}(J)$. $f_J$ is Riemann-integrable if $w^*$ is. Then, 
\begin{equation*}
    \begin{aligned}
        \lim_{J \rightarrow \infty} \left(\frac{T}{J} \sum_{i = 0}^{J-1}w_i^* \gamma_{11}^{-i-1}(J) \right) 
        & =  \lim_{J \rightarrow \infty} \sum_{i = 0}^{J-1} \frac{T}{J} f_J(t_i) \\
        & = \int_0^T \lim_{J \rightarrow \infty} f_J(s) \dx{s} \text{ (By the properties of Riemann-integrable functions)} \\
        & = \int_0^T w^*(s)e^{\beta_{11}s}\dx{s}.
    \end{aligned}
\end{equation*}
Hence we have finished the proof.

\end{proof}

\bibliography{discretize}
\end{document}